\setlist[itemize]{leftmargin=*, itemsep=2pt, topsep=2pt, parsep=0pt, partopsep=0pt}
\setlist[enumerate]{leftmargin=*, itemsep=2pt, topsep=2pt, parsep=0pt, partopsep=0pt}
\begin{document}

\title{Permutation Matching Under Parikh Budgets: Linear-Time Detection, Packing, and Disjoint Selection}
\titlerunning{Permutation Matching Under Parikh Budgets}

\author{MD Nazmul Alam Shanto \and Md. Tanzeem Rahat \and Md. Manzurul Hasan}
\authorrunning{M. N. A. Shanto et al.}
\institute{Department of Computer Science, American International University-Bangladesh, Dhaka, Bangladesh\\
\email{mdnazmulshanto2001@gmail.com, tanzeem.rahat@aiub.edu, manzurul@aiub.edu}}

\maketitle

\begin{abstract}
We study permutation (jumbled/Abelian) pattern matching over a general alphabet $\Sigma$. Given a pattern $P$ of length $m$ and a text $T$ of length $n$, the classical task is to decide whether $T$ contains a length-$m$ substring whose Parikh vector equals that of $P$. While this existence problem admits a linear-time sliding-window solution, many practical applications require optimization and packing variants beyond mere detection. We present a unified sliding-window framework based on maintaining the Parikh-vector difference between $P$ and the current window of $T$, enabling permutation matching in $O(n+\sigma)$ time and $O(\sigma)$ space, where $\sigma = |\Sigma|$. Building on this foundation, we introduce a combinatorial-optimization variant that we call Maximum Feasible Substring under Pattern Supply (MFSP): find the longest substring $S$ of $T$ whose symbol counts are component-wise bounded by those of $P$. We show that MFSP can also be solved in $O(n+\sigma)$ time via a two-pointer feasibility maintenance algorithm, providing an exact packing interpretation of $P$ as a resource budget.
Finally, we address non-overlapping occurrence selection by modeling each permutation match as an equal-length interval and proving that a greedy earliest-finishing strategy yields a
maximum-cardinality set of disjoint matches, computable in linear time once all matches are enumerated. Our results provide concise, provably correct algorithms with tight bounds, and connect
frequency-based string matching to packing-style optimization primitives.

\keywords{Jumbled (Abelian) pattern matching \and Parikh (frequency) vectors  \and Sliding-window algorithms \and Interval packing / non-overlapping selection}
\end{abstract}

\section{Introduction}
\label{sec:intro}

Pattern matching is a foundational task in string algorithms with applications spanning
text processing, bioinformatics, cybersecurity, and information retrieval.  Beyond the
classical setting-where a pattern must appear as a substring with the \emph{same order}-a
widely studied variant asks whether a pattern occurs \emph{up to permutation}.  Concretely,
given a pattern $P$ of length $m$ and a text $T$ of length $n$, the \emph{permutation (jumbled/Abelian) matching}
problem asks whether $T$ contains a substring of length $m$ whose multiset of symbols equals
that of $P$.  This problem is also known as \emph{jumbled pattern matching}~\cite{cicalese2009} and is closely related
to permutation pattern matching for permutations~\cite{ibarra1997,bose1998}.

Permutation matching has been studied under multiple lenses.
In the binary case, specialized indexing and algorithms exploit one-dimensional summaries
of windows~\cite{moosa2010,Giaquinta2013}.  For general alphabets, the central object is the
\emph{Parikh vector} (frequency vector) of a string, and jumbled matching becomes a question of
frequency equality~\cite{cicalese2009,Burcsi2011}.  A large body of work addresses \emph{indexing}
for answering many queries efficiently, including efficient indexes for constant-sized alphabets~\cite{Kociumaka2016},
hardness results for general jumbled indexing~\cite{Amir2014}, approximate variants~\cite{Burcsi2012},
online algorithms~\cite{Ghuman2018}, SIMD acceleration~\cite{Ghuman2016}, and even quantum models~\cite{Juarez2022}.
These results underline both the theoretical richness and the practical relevance of frequency-based matching.

\medskip
\noindent \emph{From detection to optimization and packing.}
While detecting a permutation occurrence is fundamental, many applications demand richer
\emph{optimization} and \emph{packing} primitives.  For example: (i) when the pattern acts as a \emph{resource budget}
(e.g., allowed counts of tokens), one may seek the \emph{longest} substring of $T$ that does not exceed this budget;
(ii) when multiple occurrences exist, one may want a maximum-size set of \emph{non-overlapping} matches.  Such questions
naturally connect jumbled matching to combinatorial optimization: feasibility under a supply vector resembles a packing constraint,
and selecting disjoint occurrences becomes an interval packing problem.

\begin{figure}[t]
\centering
\begin{tikzpicture}[scale=0.95, every node/.style={font=\small}]
\node[draw, rounded corners, inner sep=3pt] (T) at (0,0)
{$T=\; \boxed{a}\;\boxed{b}\;\boxed{c}\;\boxed{a}\;\boxed{b}\;\boxed{d}\;\boxed{c}\;\boxed{b}$};

\draw[thick, red, rounded corners]
(-1.7,-0.38) rectangle (-0.1,0.38);
\node[red] at (-0.6,0.65) {window $W_i$};

\node[draw, rounded corners, inner sep=3pt] (P) at (0,-1.15)
{$P=\; \boxed{b}\;\boxed{a}\;\boxed{c}$};

\draw[->, thick] (-0.6,-0.38) -- (-0.6,-0.95);

\node[align=left] at (3.1,-1.1)
{$\mathrm{freq}(W_i)=\mathrm{freq}(P)$?\\
(Parikh equality)};

\node[draw, rounded corners, inner sep=3pt] (S) at (0,-2.55)
{$S=\;T[\ell..r]$};

\node[align=left] at (5.2,-2.55)
{$\mathrm{freq}(S)\preceq \mathrm{freq}(P)$\\
(component-wise budget / packing feasibility)};

\end{tikzpicture}
\caption{Two viewpoints used in this paper. Top: permutation matching asks whether some length-$m$ window $W_i$
has the same Parikh vector as $P$. Bottom: treating $\mathrm{freq}(P)$ as a supply vector yields a packing-style feasibility constraint
for selecting long substrings under symbol budgets.}
\label{fig:intro}
\end{figure}
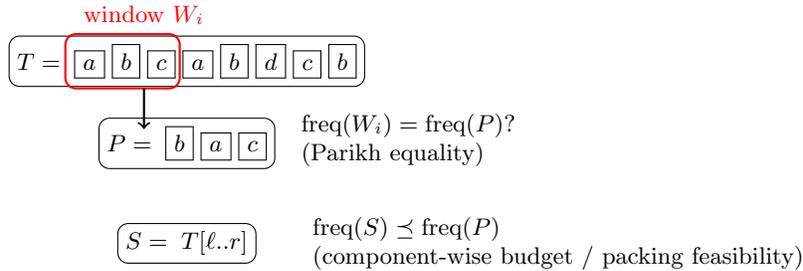

\medskip
\noindent\emph{Our contributions.}
This paper presents a concise sliding-window framework with provable guarantees, and then
builds optimization/packing variants that are natural in combinatorial optimization settings.
\begin{itemize}
  \item \emph{Linear-time permutation matching (baseline).}
  We present an $O(n+\sigma)$-time, $O(\sigma)$-space algorithm for deciding whether $T$ contains a length-$m$
  permutation occurrence of $P$ by maintaining the Parikh-difference vector and a nonzero counter.

  \item \emph{MFSP: Maximum Feasible Substring under Pattern Supply.}
  We introduce an optimization variant: find the longest substring $S$ of $T$ such that
  $\mathrm{freq}(S)\preceq \mathrm{freq}(P)$.  Interpreting $\mathrm{freq}(P)$ as a resource budget connects
  the problem to packing feasibility.  We give a linear-time two-pointer algorithm with a full correctness proof.

  \item \emph{Non-overlapping occurrence selection via interval packing.}
  When all permutation matches are viewed as equal-length intervals, selecting a maximum number of disjoint matches
  becomes an interval packing problem.  We prove that a greedy earliest-finishing (equivalently earliest-start) rule
  is optimal and yields a linear-time procedure once matches are enumerated.
\end{itemize}

\noindent\emph{Organization.}
Section~\ref{sec:prelim} introduces notation and Parikh vectors.
Section~\ref{sec:perm} presents the incremental Parikh-difference algorithm for permutation matching.
Section~\ref{sec:mfsp} formulates MFSP and proves the correctness of a linear-time feasibility-maintenance algorithm.
Section~\ref{sec:packing} studies non-overlapping selection as interval packing.
Section~\ref{sec:exp} reports experimental observations, and
Section~\ref{sec:conc} concludes.

\section{Preliminaries and Notation}\label{sec:prelim}

Let $\Sigma$ be an alphabet of size $\sigma = |\Sigma|$.  A string (word) $S$ over $\Sigma$ has
length $|S|$, and we use $0$-based indexing so that $S[i]$ denotes the character at position
$i$ for $0 \le i < |S|$.  For integers $\ell \le r$, we write $S[\ell..r]$ for the substring
$S[\ell]S[\ell+1]\cdots S[r]$.  For a text $T$ of length $n$ and an integer $m \le n$, we
denote by
\[
W_i \;=\; T[i..i+m-1] \qquad (0 \le i \le n-m)
\]
the length-$m$ window of $T$ starting at index $i$.

\subsection{Parikh (frequency) vectors}
For a string $S$, the \emph{Parikh vector} (frequency vector) of $S$ is the function
$\mathrm{freq}_S : \Sigma \to \mathbb{Z}_{\ge 0}$ defined by
\[
\mathrm{freq}_S(c) \;=\; |\{\,i : 0 \le i < |S|,\; S[i]=c \,\}| \qquad \text{for each } c \in \Sigma.
\]
When the context is clear, we write $\mathrm{freq}(S)$ for $\mathrm{freq}_S$.
Two strings $X$ and $Y$ are \emph{permutations} (anagrams) of each other if $|X|=|Y|$ and
$\mathrm{freq}(X)=\mathrm{freq}(Y)$ (component-wise equality).

\subsection{Problems studied}
Throughout, we are given a pattern $P \in \Sigma^m$ and a text $T \in \Sigma^n$.

\paragraph{Permutation matching (existence).}
Decide whether there exists an index $i$ with $0 \le i \le n-m$ such that
\[
\mathrm{freq}(W_i) \;=\; \mathrm{freq}(P).
\]

\paragraph{MFSP (maximum feasible substring under pattern supply).}
Treat $\mathrm{freq}(P)$ as a \emph{supply} (budget) vector.  A substring $S$ of $T$ is
\emph{feasible} if its symbol counts do not exceed the supply:
\[
\mathrm{freq}(S) \;\preceq\; \mathrm{freq}(P)
\quad\Longleftrightarrow\quad
\forall c\in\Sigma:\; \mathrm{freq}(S)(c) \le \mathrm{freq}(P)(c).
\]
MFSP asks to compute a longest feasible substring of $T$, i.e.,
\[
\max\{\,|S| : S = T[\ell..r] \text{ and } \mathrm{freq}(S)\preceq \mathrm{freq}(P)\,\}.
\]

\paragraph{Non-overlapping selection.}
Let $\mathcal{M}=\{\, i : \mathrm{freq}(W_i)=\mathrm{freq}(P)\,\}$ be the set of all match
starting positions.  Each $i\in\mathcal{M}$ corresponds to an interval
$I_i=[i,\, i+m-1]$.  The goal is to select a maximum-cardinality subset of pairwise
disjoint intervals.

\subsection{Difference vectors and a nonzero counter}
A central tool in this paper is maintaining \emph{differences} of Parikh vectors.
For a window $W_i$ and pattern $P$, define the difference vector
\[
\Delta_i(c) \;=\; \mathrm{freq}(W_i)(c) - \mathrm{freq}(P)(c)
\qquad \text{for } c\in\Sigma.
\]
Observe that $\mathrm{freq}(W_i)=\mathrm{freq}(P)$ if and only if $\Delta_i(c)=0$ for all
$c\in\Sigma$.  To test this efficiently we maintain
\[
nz_i \;=\; |\{\,c\in\Sigma : \Delta_i(c) \ne 0 \,\}|,
\]
the number of nonzero entries.  Then $W_i$ is a permutation match if and only if $nz_i=0$.

\subsection{Implementation model}
We assume either (i) $\Sigma$ is given explicitly and $\sigma$ is moderate, enabling
array-based counts indexed by symbols, or (ii) we perform \emph{alphabet compression}:
map only the symbols that appear in $P$ and the relevant portion of $T$ to integers
$1,\dots,\sigma'$ using a hash map, where $\sigma' \le \min(\sigma, m+n)$.
All asymptotic bounds are stated in terms of $\sigma$ for simplicity, and extend directly
to $\sigma'$ under compression.

\begin{figure}[t]
\centering
\begin{tikzpicture}[scale=1.0, every node/.style={font=\small}]
\node (t0) at (0,0) {$T:$};
\node[draw, circle, inner sep=1.8pt] (a0) at (0.8,0) {$a$};
\node[draw, circle, inner sep=1.8pt] (b0) at (1.6,0) {$b$};
\node[draw, circle, inner sep=1.8pt] (c0) at (2.4,0) {$c$};
\node[draw, circle, inner sep=1.8pt] (a1) at (3.2,0) {$a$};
\node[draw, circle, inner sep=1.8pt] (b1) at (4.0,0) {$b$};
\node[draw, circle, inner sep=1.8pt] (d1) at (4.8,0) {$d$};

\draw[thick, red, rounded corners] (0.45,-0.43) rectangle (2.75,0.5);
\node[red] at (1.6,0.65) {$W_0$};

\draw[thick, blue, rounded corners] (1.25,-0.35) rectangle (3.55,0.35);
\node[blue] at (2,-0.75) {$W_1$};

\node (p0) at (0,-1.6) {$P:$};
\node[draw, rectangle, inner sep=2pt] (pb) at (0.8,-1.6) {$b$};
\node[draw, rectangle, inner sep=2pt] (pa) at (1.6,-1.6) {$a$};
\node[draw, rectangle, inner sep=2pt] (pc) at (2.4,-1.6) {$c$};

\draw[->, thick] (2.4,-0.15) -- (2.4,-1.25);

\node[align=left] at (5.9,-1.5)
{$\Delta_i = \mathrm{freq}(W_i)-\mathrm{freq}(P)$\\
Match $\Longleftrightarrow nz_i=0$};
\end{tikzpicture}
\caption{Notation: a length-$m$ window $W_i$ slides over $T$.  Maintaining the Parikh-difference vector
$\Delta_i$ and the nonzero counter $nz_i$ allows constant-time updates per shift.}
\label{fig:prelim-diff}
\end{figure}
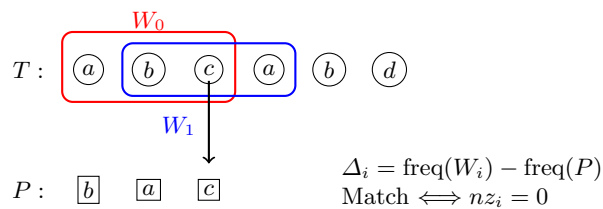

\section{Linear-Time Permutation Matching via Parikh-Difference Maintenance}\label{sec:perm}

We first address the classical \emph{permutation (jumbled/Abelian) matching} decision problem:
given a pattern $P \in \Sigma^m$ and a text $T \in \Sigma^n$ with $m \le n$, decide whether
there exists an index $i$ such that the length-$m$ window $W_i = T[i..i+m-1]$ satisfies
$\mathrm{freq}(W_i)=\mathrm{freq}(P)$.  The key is to maintain, for the current window, the
Parikh-difference vector and a single integer that certifies equality.

\subsection{Maintained state}
Recall the difference vector
\[
\Delta_i(c) \;=\; \mathrm{freq}(W_i)(c)-\mathrm{freq}(P)(c), \qquad c\in\Sigma,
\]
and the \emph{nonzero counter}
\[
nz_i \;=\; |\{\,c\in\Sigma : \Delta_i(c)\neq 0\,\}|.
\]
Then $W_i$ is a permutation match if and only if $nz_i=0$.

To update $nz_i$ in constant time, we use the following primitive: when changing
$\Delta(c)$ by $\pm 1$, only the status of $\Delta(c)$ being zero/nonzero can change.

\medskip
\noindent\emph{Update primitive.}
Let $x=\Delta(c)$ before an update and $x'=\Delta(c)+\delta$ after the update, where
$\delta\in\{-1,+1\}$.
\begin{itemize}
    \item If $x=0$ and $x'\neq 0$, then increment $nz$.
    \item If $x\neq 0$ and $x'=0$, then decrement $nz$.
    \item Otherwise, $nz$ is unchanged.
\end{itemize}

\subsection{Algorithm}
We initialize $\Delta$ by starting from $-\mathrm{freq}(P)$ and adding the symbols of the first
window $W_0$.  After that, each shift from $W_i$ to $W_{i+1}$ removes one outgoing symbol
$T[i]$ and adds one incoming symbol $T[i+m]$; hence only two coordinates of $\Delta$ change
per step.

\begin{algorithm}[t]
\caption{Permutation Matching by Parikh-Difference Maintenance}\label{alg:perm}
\KwIn{Text $T$ of length $n$, pattern $P$ of length $m$ $(m\le n)$, alphabet $\Sigma$}
\KwOut{\textsf{True} iff some window $W_i$ is a permutation of $P$ (optionally all match indices)}

\BlankLine
\textbf{Initialize:} For all $c\in\Sigma$, set $\Delta(c)\gets 0$ and $nz\gets 0$.\\
\ForEach{$c\in\Sigma$}{
    \tcp{Start from $-\mathrm{freq}(P)$}
    \If{$\mathrm{freq}(P)(c)>0$}{
        $\Delta(c)\gets -\mathrm{freq}(P)(c)$;\;
        $nz\gets nz+1$ \tcp*{since $\Delta(c)\neq 0$}
    }
}
\BlankLine
\For{$j\gets 0$ \KwTo $m-1$}{
    \tcp{Add the first window $W_0$}
    $x\gets \Delta(T[j])$;\;
    $\Delta(T[j])\gets \Delta(T[j])+1$;\;
    \If{$x=0$ \textbf{and} $\Delta(T[j])\neq 0$}{ $nz\gets nz+1$ }
    \ElseIf{$x\neq 0$ \textbf{and} $\Delta(T[j])=0$}{ $nz\gets nz-1$ }
}
\If{$nz=0$}{\Return \textsf{True}}

\BlankLine
\For{$i\gets 0$ \KwTo $n-m-1$}{
    \tcp{Shift from $W_i$ to $W_{i+1}$: remove outgoing, add incoming}
    $out\gets T[i]$;\;\; $in\gets T[i+m]$;\;

    $x\gets \Delta(out)$;\;
    $\Delta(out)\gets \Delta(out)-1$;\;
    \If{$x=0$ \textbf{and} $\Delta(out)\neq 0$}{ $nz\gets nz+1$ }
    \ElseIf{$x\neq 0$ \textbf{and} $\Delta(out)=0$}{ $nz\gets nz-1$ }

    $x\gets \Delta(in)$;\;
    $\Delta(in)\gets \Delta(in)+1$;\;
    \If{$x=0$ \textbf{and} $\Delta(in)\neq 0$}{ $nz\gets nz+1$ }
    \ElseIf{$x\neq 0$ \textbf{and} $\Delta(in)=0$}{ $nz\gets nz-1$ }

    \If{$nz=0$}{\Return \textsf{True}}
}
\Return \textsf{False}
\end{algorithm}

\subsection{Correctness}
\begin{theorem}\label{thm:perm-correct}
Algorithm~\ref{alg:perm} returns \textsf{True} if and only if there exists an index
$i$ with $0\le i\le n-m$ such that $\mathrm{freq}(W_i)=\mathrm{freq}(P)$.
\end{theorem}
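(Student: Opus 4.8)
The plan is to establish a loop invariant and then argue both directions from it. First I would state the key invariant precisely: at every point where the algorithm tests \texttt{if $nz=0$}, the array $\Delta$ equals the difference vector of the window currently under consideration minus $\freq(P)$, and $nz$ equals the number of nonzero coordinates of that $\Delta$. Concretely, after the initialization loop $\Delta(c) = -\freq(P)(c)$ for all $c$; after the first window-building loop, $\Delta(c) = \freq(W_0)(c) - \freq(P)(c) = \Delta_0(c)$; and at the end of iteration $i$ of the shift loop, $\Delta(c) = \Delta_{i+1}(c)$ for all $c$. I would prove this by induction on the number of $\pm1$ updates performed, observing that one pass of the shift loop applies exactly the two updates $\freq(\cdot)(T[i]) \mathrel{-}= 1$ and $\freq(\cdot)(T[i+m]) \mathrel{+}= 1$, which is precisely the relation $\freq(W_{i+1}) = \freq(W_i) - e_{T[i]} + e_{T[i+m]}$ between consecutive windows.

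Second, I would verify that the update primitive correctly maintains $nz$ as a companion invariant. The point is that a $\pm1$ change to a single coordinate $\Delta(c)$ affects the zero/nonzero status of no coordinate other than $c$ itself, so the count of nonzero entries changes by at most one; the three-case rule (increment when $0 \to \text{nonzero}$, decrement when $\text{nonzero} \to 0$, else unchanged) enumerates exactly these possibilities. Combined with the fact that the initialization seeds $nz$ with the count of $c$ having $\freq(P)(c) > 0$ — which is exactly $|\{c : \Delta(c) \neq 0\}|$ for $\Delta = -\freq(P)$ — this gives $nz = nz_i$ whenever the first invariant gives $\Delta = \Delta_i$.

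Third, with both invariants in hand, the equivalence is immediate from the characterization already recorded in the preliminaries: $\freq(W_i) = \freq(P) \iff \Delta_i \equiv 0 \iff nz_i = 0$. For the ``if'' direction, suppose some $W_i$ matches; the algorithm tests $nz = 0$ after building $W_0$ and after producing each $W_{i+1}$ for $0 \le i \le n-m-1$, i.e., it tests the condition for every window $W_0, \dots, W_{n-m}$, so when it reaches the matching index it finds $nz_i = 0$ and returns \textsf{True} (it may return earlier at another match, which is still correct). For the ``only if'' direction, the algorithm returns \textsf{True} only at a line guarded by $nz = 0$, and by the invariants that $nz$ equals $nz_i$ for the window then under consideration, so $nz_i = 0$, hence $\freq(W_i) = \freq(P)$; and if no window matches, every test fails and the algorithm falls through to \texttt{Return \textsf{False}}.

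The main obstacle is purely bookkeeping: making the loop invariant watertight at the seams — in particular checking that the first window is accounted for exactly once (the test after the building loop, not re-tested at $i=0$ of the shift loop, which handles $W_1$), and that the shift loop ranging $i$ from $0$ to $n-m-1$ together with that earlier test covers all of $W_0, \dots, W_{n-m}$ with no off-by-one gap or duplication. There is no conceptual difficulty beyond this careful indexing, since correctness of each individual constant-time update is the elementary case analysis of the update primitive.
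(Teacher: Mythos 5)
Your proposal is correct and follows essentially the same route as the paper's proof: maintain the invariant that $\Delta$ equals the Parikh-difference vector of the current window and that $nz$ counts its nonzero coordinates (updated only at zero-crossings), then conclude that the algorithm returns \textsf{True} exactly when some window satisfies $nz_i=0$, i.e., $\mathrm{freq}(W_i)=\mathrm{freq}(P)$. Your version is somewhat more explicit about the induction on individual $\pm1$ updates and the window-indexing bookkeeping, but there is no substantive difference in approach.
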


\begin{proof}
We prove the invariant that after initialization and after each shift, for every $c\in\Sigma$,
the maintained value $\Delta(c)$ equals $\mathrm{freq}(W_i)(c)-\mathrm{freq}(P)(c)$ for the
current window $W_i$, and $nz$ equals the number of symbols with nonzero $\Delta(c)$.

Initialization sets $\Delta(c)=-\mathrm{freq}(P)(c)$ (or $0$ if absent), then adds each symbol
of $W_0$ once; thus $\Delta(c)=\mathrm{freq}(W_0)(c)-\mathrm{freq}(P)(c)$.  The update rules
modify $nz$ exactly when a coordinate crosses zero, hence $nz=|\{c:\Delta(c)\neq 0\}|$ holds.

For a shift from $W_i$ to $W_{i+1}$, exactly one symbol $out=T[i]$ leaves the window and one
symbol $in=T[i+m]$ enters. Therefore, for all $c\in\Sigma$,
$\mathrm{freq}(W_{i+1})(c)=\mathrm{freq}(W_i)(c)-\mathbf{1}[c=out]+\mathbf{1}[c=in]$.
The algorithm applies precisely these two $\pm 1$ updates to $\Delta$, preserving the
difference-vector invariant, and updates $nz$ consistently with zero-crossings.

Finally, $\mathrm{freq}(W_i)=\mathrm{freq}(P)$ holds if and only if
$\Delta(c)=0$ for all $c$, which holds if and only if $nz=0$.  Hence the algorithm returns
\textsf{True} exactly for those windows that are permutations of $P$.
\end{proof}

\subsection{Complexity}
\begin{theorem}\label{thm:perm-time}
Algorithm~\ref{alg:perm} runs in $O(n+\sigma)$ time and uses $O(\sigma)$ space.
\end{theorem}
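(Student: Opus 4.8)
The plan is a straightforward accounting of the three phases of Algorithm~\ref{alg:perm}, using the fact that each invocation of the update primitive is a constant-time operation. First I would handle \emph{initialization}: the loop over $c\in\Sigma$ performs a constant amount of work per symbol (one comparison, at most one assignment, at most one increment of $nz$), for a total of $O(\sigma)$. If $\mathrm{freq}(P)$ is not supplied as input, computing it by a single scan of $P$ adds $O(m)$, which is absorbed below. I would also record, under the implementation model of Section~\ref{sec:prelim}, that representing $\Delta$ as an array indexed by $\Sigma$ makes each access $\Delta(c)$ an $O(1)$ operation; under alphabet compression the same holds with $\sigma$ replaced by $\sigma'\le\min(\sigma,m+n)$ and dictionary lookups that are $O(1)$ (amortized for hashing, or worst case for a deterministic dictionary).

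Next I would bound the \emph{first-window loop}, which runs $m$ iterations: each iteration reads $T[j]$, reads and writes one coordinate of $\Delta$, and applies the update primitive (two comparisons and at most one $\pm1$ change to $nz$), so this phase costs $O(m)$. Then the \emph{shift loop} runs $n-m$ iterations; by the discussion preceding the algorithm, each shift touches exactly two coordinates of $\Delta$ (the outgoing symbol $T[i]$ and the incoming symbol $T[i+m]$), so each iteration performs two applications of the update primitive plus the $O(1)$ test $nz=0$, giving $O(n-m)$ in total. The early-return checks contribute only $O(1)$ each.

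Summing the three phases yields $O(\sigma)+O(m)+O(n-m)=O(n+\sigma)$, where I use $m\le n$. For space, the only non-constant storage is the vector $\Delta$ (together with the symbol-to-index dictionary when compression is used), which is $O(\sigma)$, respectively $O(\sigma')$; the counter $nz$, the loop indices, and the scalars $x,\mathit{out},\mathit{in}$ occupy $O(1)$. Hence the algorithm uses $O(\sigma)$ space. I do not expect a genuine obstacle: the only point deserving care is justifying that each update primitive — and hence each window shift — really is $O(1)$, which rests on (i) touching only two coordinates of $\Delta$ per shift and (ii) constant-time indexing into $\Delta$ under the stated implementation model.
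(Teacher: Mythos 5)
Your proposal is correct and follows essentially the same three-phase accounting as the paper's proof: $O(\sigma)$ for initialization, $O(m)$ for building the first window, $O(n-m)$ for the shifts, and $O(\sigma)$ space for the $\Delta$ array. The extra care you take about constant-time indexing under the implementation model and about computing $\mathrm{freq}(P)$ is a welcome refinement but does not change the argument.
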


\begin{proof}
The initialization over $\Sigma$ takes $O(\sigma)$ time.  Building the first window performs
$m$ constant-time updates.  Each of the $n-m$ shifts performs two constant-time updates and
one constant-time check of $nz$, for total $O(n)$ time.  The maintained arrays/maps store one
integer per symbol in $\Sigma$, hence $O(\sigma)$ space.
\end{proof}

\medskip
\noindent\emph{Remark (enumerating all matches).}
With the same maintained state, one can output all match indices by recording every $i$
for which $nz=0$; this adds $O(\#\mathcal{M})$ output time and does not change the asymptotic
running time.

\section{MFSP: Maximum Feasible Substring under Pattern Supply}\label{sec:mfsp}

We now move from detection to an optimization primitive that naturally fits a packing
interpretation.  Let $P\in\Sigma^m$ be a \emph{supply string}.  Its Parikh vector
$\mathrm{freq}(P)$ defines a component-wise budget on the symbols that may appear in a
chosen substring of the text $T$.

\subsection{Problem definition}
A substring $S=T[\ell..r]$ is \emph{feasible} if
\[
\mathrm{freq}(S) \preceq \mathrm{freq}(P)
\quad\Longleftrightarrow\quad
\forall c\in\Sigma:\; \mathrm{freq}(S)(c) \le \mathrm{freq}(P)(c).
\]
\noindent\emph{MFSP} (Maximum Feasible Substring under Pattern Supply) asks to find
\[
(\ell^\star,r^\star)\in \arg\max_{0\le \ell\le r < n}\; (r-\ell+1)
\quad \text{s.t.}\quad \mathrm{freq}(T[\ell..r]) \preceq \mathrm{freq}(P).
\]
We also output the maximum length $L^\star=r^\star-\ell^\star+1$.

\medskip
\noindent\emph{Packing viewpoint.}
Interpreting $\mathrm{freq}(P)$ as a budget vector, a substring $T[\ell..r]$ is feasible if
its symbol-count consumption does not exceed the available supply.  MFSP therefore asks for
the \emph{largest feasible packed block} (a contiguous selection) within $T$.

\subsection{Two-pointer feasibility maintenance}
Feasibility is monotone under shrinking: if a window is infeasible, removing characters from
its left end can only decrease its counts and eventually restore feasibility.  This monotonicity
supports a linear-time two-pointer algorithm (a.k.a.\ sliding window) that maintains a maximal
feasible window for each right endpoint.

We maintain an array $\mathrm{cnt}(c)$ for the current window $T[\ell..r]$ and the supply
array $\mathrm{sup}(c)=\mathrm{freq}(P)(c)$.  The window is feasible iff
$\mathrm{cnt}(c)\le \mathrm{sup}(c)$ for all $c$.  To test feasibility in $O(1)$ amortized time,
we maintain a violation counter
\[
viol \;=\; |\{\,c\in\Sigma : \mathrm{cnt}(c) > \mathrm{sup}(c)\,\}|.
\]
As with $nz$ in Section~\ref{sec:perm}, updating $\mathrm{cnt}(c)$ by $\pm 1$ only changes
whether a symbol is violating at a threshold crossing.

\begin{algorithm}[t]
\caption{MFSP by Two-Pointer Supply Feasibility}\label{alg:mfsp}
\KwIn{Text $T$ of length $n$, supply pattern $P$, alphabet $\Sigma$}
\KwOut{A longest feasible substring $T[\ell^\star..r^\star]$ and its length $L^\star$}

\BlankLine
Compute $\mathrm{sup}(c)\gets \mathrm{freq}(P)(c)$ for all $c\in\Sigma$.\\
Initialize $\ell\gets 0$, $L^\star\gets 0$, $(\ell^\star,r^\star)\gets (0,-1)$.\\
For all $c\in\Sigma$: $\mathrm{cnt}(c)\gets 0$;\; $viol\gets 0$.\\

\BlankLine
\For{$r\gets 0$ \KwTo $n-1$}{
    $c\gets T[r]$;\;
    $x\gets \mathrm{cnt}(c)$;\;
    $\mathrm{cnt}(c)\gets \mathrm{cnt}(c)+1$;\;
    \If{$x \le \mathrm{sup}(c)$ \textbf{and} $\mathrm{cnt}(c) > \mathrm{sup}(c)$}{
        $viol\gets viol+1$ \tcp*{crossed into violation}
    }

    \While{$viol>0$}{
        $d\gets T[\ell]$;\;
        $y\gets \mathrm{cnt}(d)$;\;
        $\mathrm{cnt}(d)\gets \mathrm{cnt}(d)-1$;\;
        \If{$y > \mathrm{sup}(d)$ \textbf{and} $\mathrm{cnt}(d) \le \mathrm{sup}(d)$}{
            $viol\gets viol-1$ \tcp*{resolved violation}
        }
        $\ell\gets \ell+1$;\;
    }

    \tcp{Now $T[\ell..r]$ is feasible and is maximal for this $r$ under the maintained $\ell$}
    \If{$r-\ell+1 > L^\star$}{
        $L^\star \gets r-\ell+1$;\;
        $(\ell^\star,r^\star)\gets (\ell,r)$;\;
    }
}
\Return{$(\ell^\star,r^\star), L^\star$}
\end{algorithm}

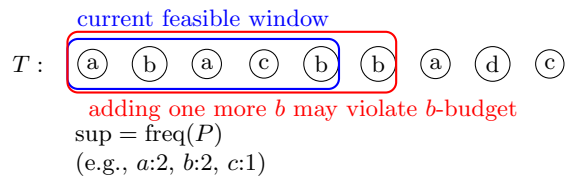
\begin{figure}[t]
\centering
\begin{tikzpicture}[scale=0.95, every node/.style={font=\small}]
\node (t) at (0,0) {$T:$};
\foreach \x/\ch in {0.9/a,1.7/b,2.5/a,3.3/c,4.1/b,4.9/b,5.7/a,6.5/d,7.3/c}{
  \node[draw, circle, inner sep=1.8pt] at (\x,0) {\ch};
}
\node[align=left] at (2.0,-1.2) {$\mathrm{sup}=\mathrm{freq}(P)$\\
(e.g., $a{:}2,\,b{:}2,\,c{:}1$)};
\draw[thick, blue, rounded corners] (0.55,-0.35) rectangle (4.35,0.35);
\node[blue] at (2.45,0.65) {current feasible window};

\draw[thick, red, rounded corners] (0.55,0.45) rectangle (5.15,-0.4);
\node[red] at (3.85,-0.65) {adding one more $b$ may violate $b$-budget};

\end{tikzpicture}
\caption{MFSP maintains a feasible window under the supply vector $\mathrm{sup}=\mathrm{freq}(P)$.
When the window becomes infeasible, the left pointer advances until feasibility is restored.}
\label{fig:mfsp}
\end{figure}

\subsection{Correctness}
We prove that Algorithm~\ref{alg:mfsp} returns a longest feasible substring.

\begin{theorem}\label{thm:mfsp-correct}
Algorithm~\ref{alg:mfsp} outputs indices $(\ell^\star,r^\star)$ such that
$T[\ell^\star..r^\star]$ is feasible and has maximum length among all feasible substrings of $T$.
\end{theorem}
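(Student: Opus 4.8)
The plan is to follow the structure of the proof of Theorem~\ref{thm:perm-correct}: first fix what the maintained arrays and counter represent, then isolate the structural monotonicity that makes a forward-only left pointer correct. Write $\mathrm{sup}(c)=\mathrm{freq}(P)(c)$ as in the algorithm and call a window \emph{feasible} when $\mathrm{cnt}(c)\le\mathrm{sup}(c)$ for all $c$. For each right endpoint $r\in\{-1,0,\dots,n-1\}$ define
\[
\ell^{\min}(r)=\min\{\,\ell\in\{0,\dots,r+1\}:\ T[\ell..r]\text{ is feasible}\,\},
\]
with the convention that $T[r+1..r]$ is the empty string, whose Parikh vector is $\mathbf 0\preceq\mathrm{freq}(P)$; thus the set is nonempty and $\ell^{\min}(-1)=0$. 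The longest feasible substring with right endpoint $r$ has length exactly $r-\ell^{\min}(r)+1$, and the MFSP optimum equals $\max_{0\le r<n}\,(r-\ell^{\min}(r)+1)$.

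The two structural facts I would prove first are: (a) \emph{feasibility is closed under taking substrings}, because if $\mathrm{freq}(T[\ell..r])\preceq\mathrm{freq}(P)$ and $\ell\le\ell'\le r'\le r$, then $\mathrm{freq}(T[\ell'..r'])$ is component-wise at most $\mathrm{freq}(T[\ell..r])$, hence at most $\mathrm{freq}(P)$ by transitivity of $\preceq$; and (b) as a consequence, for fixed $r$ the set of feasible left endpoints is the contiguous range $[\ell^{\min}(r),\,r+1]$ (it is upward closed), and $\ell^{\min}$ is nondecreasing in $r$ (any left endpoint feasible for $r+1$ is feasible for $r$, since $T[\ell..r]$ is a prefix of $T[\ell..r+1]$, so the feasible set can only shrink as $r$ grows, up to the harmless empty window). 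Fact~(b) is precisely what licenses advancing $\ell$ only forward and never rescanning.

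Next I would set up the loop invariants. By the threshold-crossing update---the same primitive as $nz$ in Section~\ref{sec:perm}---after each modification $\mathrm{cnt}(c)$ equals the number of occurrences of $c$ in the current window $T[\ell..r]$, and $viol=|\{c:\mathrm{cnt}(c)>\mathrm{sup}(c)\}|$; hence $viol=0$ iff the current window is feasible. The main invariant, proved by induction on $r$, is that at the end of the iteration for right endpoint $r$ one has $\ell=\ell^{\min}(r)$. The base case $r=-1$ reads $\ell=0=\ell^{\min}(-1)$. For the inductive step, entering iteration $r$ we have $\ell=\ell^{\min}(r-1)$; after incrementing $\mathrm{cnt}(T[r])$, the \texttt{while} loop raises $\ell$ until $viol=0$, i.e.\ to the first feasible window met when scanning left endpoints upward from $\ell^{\min}(r-1)$. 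Since $\ell^{\min}(r-1)\le\ell^{\min}(r)$ by Fact~(b) and the feasible set for $r$ is the range $[\ell^{\min}(r),\,r+1]$, that first feasible value is exactly $\ell^{\min}(r)$; the loop terminates because the empty window ($\ell=r+1$) is always feasible.

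Finally I would combine the pieces: every time the algorithm updates $(\ell^\star,r^\star)$ it does so at a configuration with $viol=0$, so $T[\ell^\star..r^\star]$ is feasible; and since the candidate length recorded at step $r$ is $r-\ell^{\min}(r)+1$, which upper-bounds the length of any feasible substring ending at $r$ (as $\ell\ge\ell^{\min}(r)$ for every such substring), taking the maximum over all $r$ yields a feasible substring of globally maximum length. If no nonempty window is feasible (for instance when every symbol occurring in $T$ has supply $0$), the algorithm correctly returns $L^\star=0$ with the empty substring. I expect the main obstacle to be making the inductive pointer-tracking argument airtight: one must argue simultaneously that $\ell$ never overshoots (the window stays feasible once $viol$ reaches $0$, by the contiguity in Fact~(b)) and never stops short (it keeps advancing while $viol>0$), and that this is compatible with \emph{reusing} the left pointer across iterations rather than restarting it---which is exactly where monotonicity of $\ell^{\min}$ is indispensable.
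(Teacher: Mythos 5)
Your proposal is correct and follows essentially the same route as the paper's proof: establish that $\mathrm{cnt}$ and $viol$ faithfully represent the current window and its violations, show that after the \textbf{while}-loop the window $T[\ell..r]$ is the longest feasible substring ending at $r$ (via monotonicity of feasibility under shrinking), and conclude by taking the maximum over all $r$. Your explicit formulation of the invariant $\ell=\ell^{\min}(r)$ with the upward-closedness and monotonicity of the feasible left-endpoint set is a tighter rendering of the paper's more informal ``re-including the removed character would reintroduce a violation'' argument, but it is the same underlying idea.
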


\begin{proof}
We maintain the invariant that after the \textbf{while}-loop terminates for a fixed right
endpoint $r$, the current window $T[\ell..r]$ is feasible, i.e., for all $c\in\Sigma$,
$\mathrm{cnt}(c)\le \mathrm{sup}(c)$.

\smallskip\noindent
\emph{Feasibility restoration.}
When $T[r]$ is added, the only possible new infeasibility is that for the character
$c=T[r]$, $\mathrm{cnt}(c)$ exceeds $\mathrm{sup}(c)$, which is detected by incrementing
$viol$. Each iteration of the \textbf{while}-loop removes exactly one character from the
left, decreasing exactly one count by $1$. A violation for some symbol $d$ disappears
precisely when its count crosses from $\mathrm{cnt}(d)>\mathrm{sup}(d)$ to
$\mathrm{cnt}(d)\le\mathrm{sup}(d)$, at which point we decrement $viol$. Thus the loop stops
exactly when there are no violating symbols, i.e., feasibility holds.

\smallskip\noindent
\emph{Maximality for each $r$.}
Fix $r$. Let $\ell_r$ be the value of $\ell$ after the while-loop ends. At that moment,
$T[\ell_r..r]$ is feasible. Moreover, the algorithm maintains $\ell$ as small as possible
subject to feasibility: if $\ell$ were decreased by one (i.e., if we tried to include
$T[\ell_r-1]$), then that character was removed at some point only because the window was
infeasible; hence re-including it would reintroduce a violation. Therefore, among all
feasible substrings ending at $r$, the window $T[\ell_r..r]$ has maximum length.

\smallskip\noindent
\emph{Global optimality.}
Let $S^\star=T[a..b]$ be a longest feasible substring in $T$. Consider the iteration
$r=b$. By maximality for this $r$, the algorithm's feasible window ending at $b$ has length
at least $|T[a..b]|$. Therefore, the maximum length $L^\star$ recorded by the algorithm
over all right endpoints is at least $|S^\star|$. Since the algorithm only records lengths
of feasible windows, $L^\star$ cannot exceed the optimum. Hence $L^\star=|S^\star|$ and the
recorded $(\ell^\star,r^\star)$ is optimal.
\end{proof}

\subsection{Complexity}
\begin{theorem}\label{thm:mfsp-time}
Algorithm~\ref{alg:mfsp} runs in $O(n+\sigma)$ time and uses $O(\sigma)$ space.
\end{theorem}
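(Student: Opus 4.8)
The plan is to decompose the running time of Algorithm~\ref{alg:mfsp} into three contributions — the initial preprocessing, the per-iteration cost of the outer \textbf{for} loop (excluding the nested loop), and the aggregate cost of the inner \textbf{while} loop — and to bound the last one by the standard two-pointer amortization argument. Each of the first two contributions is $O(n+\sigma)$ by direct inspection; the third is $O(n)$ because the left pointer $\ell$ only advances.

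First I would handle preprocessing. Computing $\mathrm{sup}(c)=\mathrm{freq}(P)(c)$ for all $c$ amounts to zeroing one slot per symbol of $\Sigma$ (equivalently, per compressed symbol under the implementation model) plus one left-to-right pass over $P$, costing $O(\sigma+m)=O(n+\sigma)$ since $m\le n$; initializing $\mathrm{cnt}(\cdot)$, $viol$, $\ell$, $L^\star$, and $(\ell^\star,r^\star)$ adds $O(\sigma)$. Next, the outer loop runs exactly $n$ times, and every line in its body other than the \textbf{while} loop — reading $T[r]$, one increment of $\mathrm{cnt}(c)$, the threshold-crossing test on $viol$, the length comparison, and the conditional update of $(\ell^\star,r^\star)$ — takes $O(1)$ time, so this part contributes $O(n)$.

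The crux is the \textbf{while} loop, and the point needing care is to show that its total work over the whole execution is $O(n)$ rather than $O(n)$ per value of $r$. I would record the invariant that $\ell$ is monotonically nondecreasing (it occurs in the algorithm only through the assignment $\ell\gets\ell+1$) and never exceeds $n$ (once the window becomes empty it is trivially feasible, so $viol=0$ and the loop exits). Each execution of the \textbf{while} body increments $\ell$ by exactly one and otherwise performs $O(1)$ work — reading $T[\ell]$, one decrement of $\mathrm{cnt}(d)$, one threshold test, and a possible decrement of $viol$. Hence the number of \textbf{while}-body executions summed over all iterations of the outer loop is at most $n$, contributing $O(n)$ in total. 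Adding the three contributions yields the claimed $O(n+\sigma)$ time bound.

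Finally, for the space bound, the only storage that is not $O(1)$ consists of the two arrays (or hash maps under alphabet compression) $\mathrm{sup}(\cdot)$ and $\mathrm{cnt}(\cdot)$, each holding one integer per symbol of $\Sigma$, for $O(\sigma)$ total; all other quantities are scalars. I do not anticipate a genuine obstacle: the one subtlety is making the monotonicity and boundedness of $\ell$ explicit so that the inner-loop cost telescopes, exactly as in the complexity analysis of Algorithm~\ref{alg:perm}.
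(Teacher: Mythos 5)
Your proposal is correct and follows essentially the same route as the paper's proof: the outer loop runs $n$ times with $O(1)$ work per iteration outside the inner loop, the total inner-loop cost is $O(n)$ because $\ell$ only advances and does so at most $n$ times, and the $\mathrm{sup}$ and $\mathrm{cnt}$ arrays give the $O(\sigma)$ space bound. Your treatment is slightly more explicit about the $O(m)$ pass over $P$ and the boundedness of $\ell$, but the argument is the same.
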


\begin{proof}
Each step increments $r$ once, so the outer loop runs $n$ times. The pointer $\ell$ only
moves forward; across the entire execution it advances at most $n$ times. Each advancement
performs $O(1)$ updates. Thus the total time spent in the while-loops is $O(n)$, and the
total running time is $O(n+\sigma)$ including initialization of arrays over $\Sigma$.
The algorithm stores $\mathrm{sup}$ and $\mathrm{cnt}$ arrays (or compressed maps), requiring
$O(\sigma)$ space.
\end{proof}

\medskip
\noindent\emph{Remark (recovering the substring).}
The algorithm already stores $(\ell^\star,r^\star)$, so the optimal substring is
$T[\ell^\star..r^\star]$ and can be output in $O(L^\star)$ time if desired.

\section{Non-overlapping Permutation Occurrences as Interval Packing}\label{sec:packing}

This section addresses selecting a maximum number of \emph{non-overlapping} permutation
occurrences of the pattern $P$ in the text $T$.  Once all permutation matches are identified,
each match corresponds to a fixed-length interval, and the selection task becomes a
classical interval packing problem.

\subsection{From matches to intervals}
Let
\[
\mathcal{M} \;=\; \{\, i \in \{0,\dots,n-m\} : \mathrm{freq}(W_i)=\mathrm{freq}(P)\,\}
\]
be the set of all starting indices of length-$m$ permutation matches.  Each $i\in\mathcal{M}$
induces an interval
\[
I_i \;=\; [\,i,\; i+m-1\,].
\]
A set $S\subseteq \mathcal{M}$ is \emph{non-overlapping} if the corresponding intervals are
pairwise disjoint, i.e., for all distinct $i,j\in S$, $I_i\cap I_j=\emptyset$.
Our goal is to compute a maximum-cardinality non-overlapping subset of $\mathcal{M}$.

\subsection{Greedy selection}
Because all intervals have the \emph{same length} $m$, an optimal solution is obtained by a
simple greedy rule: scan $T$ from left to right and select a match whenever possible, then
skip ahead by $m$.

\begin{algorithm}[t]
\caption{Maximum Non-overlapping Permutation Matches (Greedy)}\label{alg:packing}
\KwIn{Text $T$ of length $n$, pattern $P$ of length $m$}
\KwOut{A maximum-cardinality set of non-overlapping permutation matches}

\BlankLine
Compute all match start positions $\mathcal{M}$ using Algorithm~\ref{alg:perm}.\\
Initialize $S\gets \emptyset$ and $i\gets 0$.\\

\While{$i \le n-m$}{
    \If{$i \in \mathcal{M}$}{
        Add $i$ to $S$ \tcp*{select interval $[i,i+m-1]$}
        $i \gets i+m$ \tcp*{skip to avoid overlap}
    }\Else{
        $i \gets i+1$
    }
}
\Return{$S$}
\end{algorithm}

\subsection{Correctness via an exchange argument}
\begin{theorem}\label{thm:packing-opt}
Algorithm~\ref{alg:packing} outputs a maximum-cardinality set of pairwise disjoint match
intervals.
\end{theorem}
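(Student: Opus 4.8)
The plan is to prove optimality by a standard exchange argument, exploiting the fact that all intervals have the same length $m$, so disjointness of $I_i$ and $I_j$ with $i<j$ is equivalent to the arithmetic condition $j \ge i+m$. First I would observe that the greedy solution $S = \{g_1 < g_2 < \dots < g_k\}$ satisfies $g_{t+1} \ge g_t + m$ by construction (each selection advances $i$ by $m$), so $S$ is indeed a feasible non-overlapping set; the only thing to show is that no feasible set has more than $k$ elements. I would also record the crucial greedy property: $g_1$ is the smallest index in $\mathcal{M}$, and more generally $g_{t+1}$ is the smallest index in $\mathcal{M}$ that is $\ge g_t + m$ (the scan skips exactly to $g_t + m$ and then advances one step at a time until the next match, never overshooting a valid candidate).

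Next I would take an arbitrary optimal solution $O = \{o_1 < o_2 < \dots < o_p\}$ with $p = |O|$ maximum, where $o_{s+1} \ge o_s + m$ for all $s$, and prove by induction on $t$ that $g_t \le o_t$ for every $t \le \min(k,p)$ — i.e. the greedy $t$-th pick finishes no later than the optimum's $t$-th pick. The base case $g_1 \le o_1$ holds because $g_1$ is the minimum of $\mathcal{M}$ while $o_1 \in \mathcal{M}$. For the inductive step, assume $g_t \le o_t$; then $o_{t+1} \ge o_t + m \ge g_t + m$, so $o_{t+1}$ is a match index that is $\ge g_t + m$, and since $g_{t+1}$ is by the greedy property the \emph{smallest} such index, we get $g_{t+1} \le o_{t+1}$. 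This dominance closes the argument: if we had $p > k$, then applying the claim with $t = k$ gives $g_k \le o_k$, hence $o_{k+1} \ge o_k + m \ge g_k + m$; but then the greedy scan, upon reaching position $g_k + m \le o_{k+1} \le n-m$, would continue and eventually hit the match $o_{k+1}$ (or an earlier one), contradicting that it stopped after selecting $g_k$. Therefore $p \le k$, so $S$ is optimal.

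The main obstacle is purely a bookkeeping one: making the greedy property precise, namely that after selecting $g_t$ the algorithm's pointer is exactly $g_t + m$ and thereafter increases by $1$ until the next element of $\mathcal{M}$ is found, so that $g_{t+1} = \min\{\, i \in \mathcal{M} : i \ge g_t + m \,\}$ whenever this set is nonempty. This follows directly from reading Algorithm~\ref{alg:packing}, but it should be stated as an explicit lemma or inline claim before the induction, because the whole exchange argument rests on it. A secondary (minor) point is handling the termination/contradiction cleanly: one must note that $g_k + m \le n - m + 1$ need not hold in general, so the contradiction should be phrased via the existence of the match $o_{k+1}$ with $o_{k+1} \le n-m$ and $o_{k+1} \ge g_k + m$, which forces the loop to select at least one more match. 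Once the greedy property is in hand, the rest is a two-line induction.
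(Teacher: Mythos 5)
Your proof is correct, but it takes a different route from the paper's. The paper uses an iterated \emph{exchange} argument: it takes an optimal solution $S^\star$, replaces its leftmost interval $I_{j_1}$ by the greedy's first pick $I_{i_1}$ (arguing that since $i_1\le j_1$ and all intervals have length $m$, the swap cannot create an overlap with the rest of $S^\star$), and then recurses on the suffix of $T$ starting at $i_1+m$. You instead prove a \emph{dominance} invariant (``greedy stays ahead''): $g_t\le o_t$ for every $t\le\min(k,p)$, established by induction from the characterization $g_{t+1}=\min\{i\in\mathcal{M}: i\ge g_t+m\}$, and then derive a contradiction from $p>k$ because $o_{k+1}\ge g_k+m$ would be a still-available match that the greedy scan must encounter before terminating. (Incidentally, you label your argument an ``exchange argument'' in the opening sentence, but what you actually carry out is the stays-ahead induction — worth fixing the label.) Both proofs hinge on the same two facts — greedy always takes the earliest available match, and equal interval lengths make disjointness equivalent to $j\ge i+m$ — but your version avoids the slightly delicate step of verifying that the modified $S^\star$ remains feasible after the swap (which the paper states somewhat loosely, via endpoints rather than start points), at the cost of needing the explicit lemma about the pointer landing exactly at $g_t+m$ and the termination case you correctly flag ($o_{k+1}\le n-m$ guarantees the loop has not yet exited). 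Your handling of both of those points is sound, so the argument is complete.
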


\begin{proof}
Let $S_g$ be the set returned by Algorithm~\ref{alg:packing}.  We show that there exists an
optimal solution $S^\star$ with $|S^\star|=|S_g|$ by a standard exchange argument.

Let $i_1$ be the \emph{first} selected start index in $S_g$ (the leftmost match chosen by the
algorithm).  Consider any optimal solution $S^\star$.  If $S^\star$ also selects $i_1$, we
proceed to the remaining suffix after $i_1+m-1$.

Otherwise, let $j_1$ be the smallest start index in $S^\star$.  Since $i_1$ is the smallest
available match start (by the greedy scan), we have $i_1 < j_1$.  Replace $j_1$ by $i_1$ in
$S^\star$.  This replacement preserves feasibility (disjointness): because intervals have
equal length $m$, shifting the first chosen interval to the left cannot create overlap with
the remaining intervals of $S^\star$ (which all start at positions $\ge j_1$ and thus end at
positions $\ge j_1+m-1 > i_1+m-1$).  Hence we obtain another optimal solution that contains
$i_1$ and has the same cardinality.

Remove interval $I_{i_1}$ and restrict attention to the suffix of $T$ starting at index
$i_1+m$.  The greedy algorithm repeats the same choice structure on this suffix, and the
same exchange argument applies inductively.  Therefore, the greedy algorithm selects the
same number of intervals as an optimal solution, i.e., $|S_g|$ is maximum.
\end{proof}

\subsection{Complexity}
\begin{theorem}\label{thm:packing-time}
Given the match set $\mathcal{M}$, Algorithm~\ref{alg:packing} runs in $O(n)$ time.
Including match enumeration via Algorithm~\ref{alg:perm}, the total time is $O(n+\sigma)$
plus output size, and the space is $O(\sigma)$.
\end{theorem}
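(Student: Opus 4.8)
The plan is to separate the running time into two phases — match enumeration and the greedy scan — and bound each. For the greedy phase, assume the match set $\mathcal{M}$ is available in a form that supports $O(1)$ membership tests (e.g., a boolean array of length $n-m+1$, or equivalently the indicator produced as a byproduct of Algorithm~\ref{alg:perm}). Then I would argue that the index pointer $i$ in the \textbf{while}-loop of Algorithm~\ref{alg:packing} only ever increases: in the "select" branch it jumps by $m \ge 1$, and in the "skip" branch it increases by $1$. Since $i$ starts at $0$ and the loop halts once $i > n-m$, the total number of loop iterations is at most $n-m+1 = O(n)$. Each iteration does a single $O(1)$ membership test and a constant amount of pointer arithmetic and (in the select branch) an append to $S$; appends total at most $|S_g| \le n$ over the whole run. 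Hence the greedy phase is $O(n)$, which gives the first sentence of the statement.

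For the combined bound, I would invoke Theorem~\ref{thm:perm-time}: enumerating all match start positions via Algorithm~\ref{alg:perm} takes $O(n+\sigma)$ time and $O(\sigma)$ space, and by the remark following that theorem, recording the match indices adds only $O(\#\mathcal{M})$ output time without changing the asymptotics. Producing the boolean-array (or list) representation of $\mathcal{M}$ that the greedy phase consumes is therefore absorbed into this cost, or at worst adds another $O(n)$ for array initialization. Adding the $O(n)$ greedy phase on top yields total time $O(n+\sigma)$ plus the $O(\#\mathcal{M})$ (more generally, $O(|S_g|)$) output, and the working space is dominated by the $O(\sigma)$ Parikh-difference arrays plus $O(n)$ for the match indicator; stated in terms of $\sigma$ as the paper does, the space is $O(\sigma)$ (or $O(\sigma + n)$ if one insists on counting the match-array explicitly — I would phrase it consistently with the earlier sections, which already carry an implicit $O(n)$ for the input).

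There is essentially no hard step here; the only point that deserves a sentence of care is confirming that the greedy loop makes at most $O(n)$ iterations despite the conditional branch — this follows because $i$ is strictly monotone increasing (by at least $1$ each pass) and bounded above by $n-m+1$, so no amortization argument is needed, unlike the two-pointer analysis in Theorem~\ref{thm:mfsp-time}. A secondary bookkeeping remark worth including is that "plus output size" in the statement refers to the cost of materializing the returned set $S_g$, whose size is at most $\lfloor n/m \rfloor$ and in any case $O(n)$, so it never dominates. I would close by noting that if $\mathcal{M}$ is supplied as a sorted list rather than an indicator array, one replaces the $O(1)$ membership test by advancing a pointer into that list, preserving the $O(n)$ bound for the greedy phase.
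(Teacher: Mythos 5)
Your proposal is correct and follows essentially the same route as the paper's own proof: the pointer $i$ is strictly increasing and bounded by $n-m+1$, so the greedy scan is $O(n)$; membership is $O(1)$ via a boolean array; and the enumeration cost is imported from Theorem~\ref{thm:perm-time}. Your side remark that the match-indicator array technically costs $O(n)$ space (which the paper's stated $O(\sigma)$ bound quietly elides) is a fair and slightly more careful observation than the paper makes, but it does not change the argument.
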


\begin{proof}
The greedy scan advances $i$ monotonically from $0$ to $n-m$ and thus performs $O(n)$
iterations.  Membership tests $i\in\mathcal{M}$ can be supported in $O(1)$ time by storing a
boolean array of length $n$ (or a hash set when $n$ is large and matches are sparse).  Match
enumeration using Algorithm~\ref{alg:perm} is $O(n+\sigma)$ time and $O(\sigma)$ space.
\end{proof}

\begin{figure}[t]
\centering
\begin{tikzpicture}[scale=1.0, every node/.style={font=\small}]
\draw[->] (0,0) -- (10.5,0) node[right] {position in $T$};

\def\m{2.0} 
\draw[thick] (1,0.6) -- (3,0.6);
\draw[thick] (3,0.6) -- (5,0.6);
\draw[thick] (6,0.6) -- (8,0.6);
\draw[thick] (7,0.6) -- (9,0.6);

\node at (2,0.9) {$I_{i_1}$};
\node at (4,0.9) {$I_{i_2}$};
\node at (7,0.9) {$I_{i_3}$};
\node at (8,0.9) {$I_{i_4}$};

\draw[ultra thick, blue] (1,0.0) -- (3,0.0);
\draw[ultra thick, blue] (6,0.0) -- (8,0.0);

\node[blue] at (2,-0.6) {selected};
\node[blue] at (7,-0.6) {selected};

\foreach \x in {0,1,2,3,4,5,6,7,8,9,10}{
  \draw (\x,0.08) -- (\x,-0.08);
  \node[below] at (\x,-0.12) {\x};
}
\end{tikzpicture}
\caption{Non-overlapping selection after enumerating all matches.
Each match is a length-$m$ interval; greedy chooses the leftmost available interval and
skips ahead by $m$, yielding a maximum-size disjoint set for equal-length intervals.}
\label{fig:packing}
\end{figure}
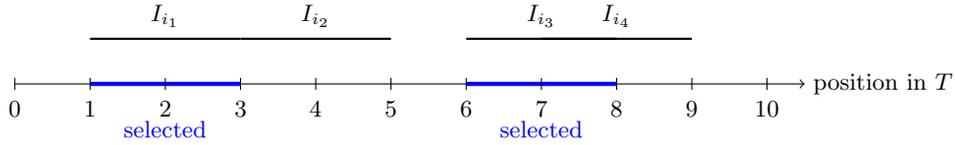

\section{Experimental Observations}\label{sec:exp}

This section reports lightweight experimental observations intended to corroborate the
practical efficiency of the proposed linear-time procedures.  Our focus is not on index
construction (as in jumbled indexing), but on \emph{single-pass} scanning and optimization
primitives.

\subsection{Implementation details}
All algorithms were implemented in a standard imperative style using (i) arrays for
moderate alphabets and (ii) \emph{alphabet compression} (hash map from symbols to
$1,\dots,\sigma'$) when $\Sigma$ is large.  For permutation matching (Section~\ref{sec:perm}),
we maintain the Parikh-difference vector $\Delta$ and the nonzero counter $nz$; each shift
updates exactly two coordinates and tests whether $nz=0$.  For MFSP (Section~\ref{sec:mfsp}),
we maintain window counts $\mathrm{cnt}$ and a violation counter $viol$; each pointer move
updates a single coordinate and possibly changes $viol$ at a threshold crossing.

\subsection{Workloads and metrics}
We evaluated the methods on two classes of inputs:
\begin{itemize}
    \item \textbf{Synthetic strings:} random texts over alphabets of varying size
    ($\sigma \in \{4, 16, 64, 256\}$), with pattern lengths
    $m \in \{16, 64, 256, 1024\}$ and text lengths up to $n=10^7$.
    \item \textbf{Natural-language corpora:} plain-text datasets (letters and ASCII symbols)
    where alphabet compression reduces memory and improves cache locality.
\end{itemize}
We report (i) total runtime as a function of $n$, (ii) throughput (symbols processed per
second), (iii) the number of matches $|\mathcal{M}|$, and (iv) for MFSP, the optimal length
$L^\star$ and its distribution across texts.

\subsection{Key observations}
Across all tested settings, both permutation matching and MFSP exhibited a linear scaling
trend with $n$, consistent with the $O(n+\sigma)$ analyses in
Theorems~\ref{thm:perm-time} and~\ref{thm:mfsp-time}.  The constant factors were dominated
by (a) memory access patterns of the counting arrays/maps and (b) the density of matches,
which affects output volume but not the scan itself.

For large alphabets, alphabet compression substantially reduced memory footprint and
improved performance when the effective alphabet $\sigma'$ was much smaller than $\sigma$.
For MFSP, the violation-counter mechanism eliminated the need for full feasibility checks:
restoring feasibility required advancing the left pointer only when a budget constraint was
exceeded, and each character was added/removed at most once, matching the standard
two-pointer amortization argument.

\medskip
\noindent\textbf{Reproducibility.}
All experiments can be reproduced from the pseudocode in Algorithms~\ref{alg:perm}
and~\ref{alg:mfsp} by fixing a random seed for synthetic generation and applying the same
alphabet-compression mapping for large-$\sigma$ inputs.

\section{Conclusion}\label{sec:conc}

We revisited permutation (jumbled/Abelian) pattern matching from a combinatorial
optimization perspective.  Using Parikh-difference maintenance, we obtained a concise
$O(n+\sigma)$-time, $O(\sigma)$-space sliding-window procedure for permutation matching with
a simple correctness invariant.  Building on this foundation, we introduced MFSP, an
optimization variant that treats $\mathrm{freq}(P)$ as a supply (budget) vector and asks for
the longest feasible substring of $T$ under component-wise constraints.  We provided a
linear-time two-pointer algorithm with a full proof of optimality.  Finally, we modeled
non-overlapping permutation occurrences as equal-length intervals and proved that a greedy
scan yields a maximum-cardinality disjoint selection, linking frequency-based string
matching to interval packing.

A natural direction for future work is to study \emph{weighted} and \emph{approximate}
extensions of MFSP and non-overlapping selection, where feasibility may allow bounded
violations or where occurrences carry weights, potentially leading to richer optimization
structures and new complexity frontiers.


\bibliographystyle{splncs04}
\bibliography{bibtex} 

\end{document}